\title{On a Hamiltonian form of an elliptic spin Ruijsenaars-Schneider system}
\author{F.Soloviev}
\newtheorem{theorem}{Theorem}
\begin{document}
\maketitle

\section{Introduction}
An elliptic Ruijenaars-Schneider (RS) model~\cite{RS86} is a Hamiltonian system of $N$ interacting particles with a Hamiltonian
\begin{equation}\label{HRS}
H=\sum_{j=1}^N e^{p_j} \prod_{s \ne j}^N \left( \dfrac{\sigma(x_j-x_s+\eta) \sigma(x_j-x_s-\eta)}{\sigma^2(x_j-x_s)} \right)^{1/2}
\end{equation}
and the canonical symplectic form $\omega=\sum \delta p_i \wedge \delta x_i$, where $p_i=\dot{x}_i$.

The equations of motions are
\begin{equation}\label{RS1}
\ddot{x}_i=\sum_{s \ne i} \dot{x}_i \dot{x}_s (V(x_s-x_i)-V(x_i-x_s)),
\end{equation}
where $V(x) = \zeta(x+\eta)-\zeta(x),$ and $\zeta(x)$ is a Weierstrass zeta function.

The limit when one or two periods of the elliptic curve go to infinity yields a trigonometric or rational
system. A RS system is a relativistic generalization of the Calogero-Moser model.

A spin generalization of RS system was suggested in~\cite{KZ95}. Each particle additionally carries
two $l$-dimensional vectors $a_i$ and $b_i$ that describe the internal degrees of freedom and affect the interaction.
Remarkably, the equations of motion remain integrable and are given by the formulas
\begin{equation}\label{RS2}
\begin{cases}
\dot{f}_{ij}=\sum_{k \ne j} f_{ik} f_{kj} V(x_j-x_k) - \sum_{k \ne i} f_{ik} f_{kj} V(x_k-x_i)\\
\dot{x}_i=f_{ii},
\end{cases}
\end{equation}
where $f_{ij} = b_i^T a_j$.

It was shown in~\cite{IK98} using the universal symplectic form (proposed in~\cite{KP97}) that
a spin elliptic RS system is Hamiltonian. An expression of a symplectic form (or Poisson structure)
in explicit coordinates is known only in the rational and trigonometric limits (see~\cite{AF98}).

The aim of this paper is to compute $\omega$ in the original coordinates $x_i$ and $f_{ij}$ in the simplest
elliptic case of 2 particles, $N=2$. We compare the obtained 2-form with a symplectic form for a system without spin
and with a Poisson structure found in~\cite{AF98} in the rational case.

\section{Symplectic form in the case $N=2$}

The general procedure developed by Krichever and Phong in~\cite{KP97} allows to construct action-angle variables for
an elliptic RS system and its spin generalization. It was done in~\cite{IK98}.
The upshot of the procedure is the following.

A Lax representation with a spectral parameter for an elliptic RS system has been found in~\cite{KZ95}.
A Lax matrix is
\begin{equation}\label{laxrs}
L_{ij}=f_i \Phi(x_i-x_j-\eta) \text{, where } \Phi(x,z)=\dfrac{\sigma(z+x+\eta)}{\sigma(z+\eta) \sigma(x)} \left[ \dfrac{\sigma(z-\eta)}{\sigma(z+\eta)} \right]^{x/2\eta}.
\end{equation}
The spectral parameter $z$ is defined on an elliptic curve $\Gamma_0$ with a cut between points $z=\eta$ and $z=-\eta$.
%Corresponding Lax equations are integrable by means of algebraic geometry (see~\cite{KZ95}).

The universal symplectic form is given by the formula
\begin{equation}\label{form}
\omega = -\dfrac{1}{2} \sum_{q \in I} \text{res}_q \thinspace \text{Tr}\left( \Psi^{-1} L^{-1} \delta L \wedge \delta \Psi -
\Psi^{-1} \delta \Psi \wedge K^{-1} \delta K \right) dz,
\end{equation}
where the sum is taken over the poles of $L$ and zeroes of $det \thinspace L$.
$\Psi$ is a matrix composed of eigenvectors of $L$, which has poles $\hat{\gamma}_s$ on the spectral curve $\hat{\Gamma}: det \thinspace (L-k I)=0$
due to normalization of eigenvectors. $k$ is a meromorphic function on $\hat{\Gamma}$ and the
matrix $K=diag(k_1,...,k_N)$ is composed of values of $k$ on different sheets of $\hat{\Gamma}$.

$\omega$ doesn't depend on the gauge transformations $L \to gLg^{-1}$ and the normalization of eigenvectors on the leaves
where the form $\delta \ln{k} dz$ is holomorphic.
$\text{Tr}\left( ... \right) dz$ is a meromorphic differential, and the sum of all its residues is zero.
Using these facts, one can show that on the leaves
\begin{equation}\label{uform}
\omega=\sum_s \delta \ln{k(\hat{\gamma}_s)} \wedge \delta z(\hat{\gamma}_s).
\end{equation}
Computations performed in~\cite{IK98} for Lax matrix~(\ref{laxrs}) show that
\begin{equation}\label{eform}
\omega=\sum_i \delta \ln{f_i} \wedge \delta x_i + \sum_{i \ne j} V(x_i-x_j) \delta x_i \wedge \delta x_j,
\end{equation}
where
$$
f_i=e^{p_i} \prod_{s \ne i}^N \left( \dfrac{\sigma(x_i-x_s+\eta) \sigma(x_i-x_s-\eta)}{\sigma^2(x_i-x_s)} \right)^{1/2}
$$
and the Hamiltonian for system~(\ref{RS1}) is $H=\sum_{i=1}^N f_i$.

A Lax representation with a spectral parameter for an elliptic spin RS system~(\ref{RS2}) has
been found in~\cite{KZ95}.
The Lax matrix is $L_{ij}=f_{ij} \Phi(x_i-x_j-\eta)$.
Formally, equations~(\ref{RS2}) are Hamiltonian with $H=\sum_{i=1}^N f_{ii}$ and symplectic form~(\ref{form})
(see~\cite{IK98} for details).
The goal of this paper is to compute form~(\ref{form}) in the original coordinates $x_i$ and $f_{ij}$.

After the gauge transformation by a diagonal matrix $g=diag(\Phi(x_1,z), \lambda \Phi(x_2,z))$
with an appropriate choice of $\lambda$, the matrix $L_{ij}$ becomes
\[
L=
\begin{pmatrix}
 -f_1 \dfrac{\sigma(z)}{\sigma(\eta)} \qquad f_3 \dfrac{\sigma(z-x_1+x_2) \sigma(z+x_1+\eta) \sigma(x_2)}{\sigma(x_2-x_1-\eta) \sigma(z+x_2+\eta) \sigma(x_1)}\\
f_3 \dfrac{\sigma(z+x_1-x_2) \sigma(z+x_2+\eta) \sigma(x_1)}{\sigma(x_1-x_2-\eta) \sigma(z+x_1+\eta) \sigma(x_2)} \qquad -f_2 \dfrac{\sigma(z)}{\sigma(\eta)}
\end{pmatrix}
\dfrac{1}{\sqrt{\sigma(z+\eta) \sigma(z-\eta)}},
\]
%$$\dfrac{1}{\sqrt{\sigma(z+\eta) \sigma(z-\eta)}},$$
where $f_1 \equiv f_{11}, f_2 \equiv f_{22}$ and $f_3 \equiv \sqrt{f_{12} f_{21}}$.

The matrix $L$ is defined on a curve $\Gamma$ of genus $g=2$, which is a 2-sheeted cover
of the elliptic curve $\Gamma_0$ with 2 branch points $z=\eta$ and $z=-\eta$.

The spectral curve $\hat{\Gamma}$ of $L$ is defined by the equation $R=det \thinspace (L_{ij}-k)=0$.
It is a 2-sheeted cover of $\Gamma$, and the function $\partial_k R$ has 4 simple poles on $\hat{\Gamma}$
above points $z=\pm\eta$. $\partial_k R$ is a meromorphic function on $\hat{\Gamma}$, hence
it also has 4 zeroes. Its zeroes are precisely the branch points of $\hat{\Gamma}$ over $\Gamma$,
and the Riemann-Hurwitz formula implies that the genus of $\hat{\Gamma}$ is $\hat{g}=5$.

The matrix valued differential $L dz$ can be seen as a global
section of the bundle $End(V_{\gamma,\alpha}) \otimes \Omega^{1,0}(\Gamma)$.
$V_{\gamma,\alpha}$ is a vector bundle determined by Tyurin parameters $z(\gamma_i)=-x_1-\eta$,
$z(\gamma_j)=-x_2-\eta$, and $\alpha_i=(0,1)^T$, $\alpha_j=(1,0)^T$, where $i=1,2$ and $j=3,4$.

The set $I$ in~(\ref{form}) is $I=\{\gamma_s,0,\pm z_0\}$, where $z_0$ is defined by the equation $det \thinspace L(z_0)=0$, or
$$
f_1 f_2 \dfrac{\sigma^2(z_0)}{\sigma^2(\eta)} - f_3^2 \dfrac{\sigma(z_0+x_1-x_2) \sigma(z_0-x_1+x_2)}{\sigma(x_1-x_2-\eta) \sigma(x_2-x_1-\eta)}=0.
$$
Notice, that we can use variables $(x_1,x_2,f_1,f_2,z_0)$ instead of $(x_1,x_2,f_1,f_2,f_3)$.

\begin{theorem}\label{th1}
In the case $N=2$ the elliptic spin RS system is Hamiltonian with a symplectic form
\begin{equation}\label{newform}
\omega=-\delta \ln{f_1} \wedge \delta x_1 - \delta \ln{f_2} \wedge \delta x_2+
2 \tilde{V}(x_1-x_2) \delta x_1 \wedge \delta x_2
\end{equation}
and Hamiltonian $H=f_1+f_2$, where $\tilde{V}(x)=\zeta(x+z_0)-\zeta(x)$.
The spinless case corresponds to $z_0=\eta$.
\end{theorem}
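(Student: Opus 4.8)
The plan is to evaluate the universal form~(\ref{form}) by residues at the points of $I=\{\gamma_s,0,\pm z_0\}$, using the explicit gauge-fixed $L$ displayed above. Since $\text{Tr}(\dots)dz$ is a meromorphic differential on $\Gamma_0$ (or $\hat\Gamma$) whose residues sum to zero, and since the contributions at the Tyurin points $\gamma_s$ produce (via the argument of~\cite{KP97,IK98} summarized around~(\ref{uform})) the ``action-angle'' piece $\sum_s\delta\ln k(\hat\gamma_s)\wedge\delta z(\hat\gamma_s)$, the real work is local: I would compute $\text{res}_{z=0}$ and $\text{res}_{z=\pm z_0}$ of $\text{Tr}(\Psi^{-1}L^{-1}\delta L\wedge\delta\Psi-\Psi^{-1}\delta\Psi\wedge K^{-1}\delta K)dz$ directly in the coordinates $(x_1,x_2,f_1,f_2,z_0)$. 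At $z=0$ the matrix $L$ vanishes to order... well, the diagonal entries vanish like $\sigma(z)$ while the off-diagonal ones are regular and nonzero, so $z=0$ is a point where $L^{-1}$ has a pole; at $z=\pm z_0$ it is $\det L$ that vanishes, so again $L^{-1}$ blows up. These are exactly the extra poles of $L^{-1}\delta L$ beyond the poles of $L$ at $z=\pm\eta$ (which are killed after the gauge transformation, or rather moved), which is why $I$ has precisely this shape.

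Concretely I would proceed as follows. First, diagonalize $L$: write $L=\Psi K\Psi^{-1}$ with $K=\mathrm{diag}(k_1,k_2)$ the two eigenvalues (roots of $k^2-(\mathrm{tr}\,L)k+\det L=0$) and $\Psi$ the $2\times2$ matrix of eigenvectors normalized by the Tyurin data $\alpha_i$. For $N=2$ this is completely explicit. Then $\Psi^{-1}L^{-1}\delta L\wedge\delta\Psi=\Psi^{-1}L^{-1}\delta L\wedge(\delta(\Psi K\Psi^{-1})\,\Psi-\dots)$ — better, use the identity $\text{Tr}(\Psi^{-1}L^{-1}\delta L\wedge\delta\Psi-\Psi^{-1}\delta\Psi\wedge K^{-1}\delta K)=\text{Tr}(K^{-1}\delta K\wedge\Psi^{-1}\delta\Psi)+(\text{exact/regular terms})$ valid away from poles, reducing everything to the single meromorphic one-form $\text{Tr}(K^{-1}\delta K\wedge\Psi^{-1}\delta\Psi)\,dz$; then the residue at $z=0$ and at $z=\pm z_0$ is extracted from the Laurent expansion of $k_{1,2}(z)$ and of the eigenvectors there. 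A cleaner route: since $\det L(z_0)=0$, near $z_0$ one eigenvalue $k_-$ has a simple zero, $\delta\ln k_-$ has a pole, and $\text{res}_{z_0}\delta\ln k_-\,dz\wedge(\cdots)$ produces a $\delta z_0$ term; collecting $z_0$ and $-z_0$ (related by the hyperelliptic-type involution $z\to-z$ of $\Gamma_0$) doubles it, which is the source of the factor $2$ and of $\tilde V(x)=\zeta(x+z_0)-\zeta(x)$ replacing $V(x)=\zeta(x+\eta)-\zeta(x)$. The residue at $z=0$, together with the normalization-of-$\Psi$ poles, I expect to reproduce $-\delta\ln f_1\wedge\delta x_1-\delta\ln f_2\wedge\delta x_2$, with the signs fixed by the $-\tfrac12$ prefactor and the two-fold count.

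The main obstacle will be the bookkeeping of the eigenvector normalization: $\Psi$ is only defined up to right multiplication by a diagonal matrix of functions on $\hat\Gamma$, and $\omega$ is claimed independent of this choice only on the leaves where $\delta\ln k\,dz$ is holomorphic — so I must either check that the chosen normalization (Tyurin $\alpha_i=(0,1)^T$, $\alpha_j=(1,0)^T$) lies on such a leaf, or track the normalization-dependent terms and verify they cancel in the trace. A secondary subtlety is that the eigenvectors develop their own poles at $z=\pm z_0$ (where the two sheets of $\hat\Gamma$ over $\Gamma$ collide) and at $z=0$, so $\delta\Psi$ contributes there too and must be combined carefully with $L^{-1}\delta L$; getting the orders of vanishing right in the local expansions of $\sigma$-functions (using $\sigma(z)\sim z$, $\sigma'(0)=1$, and the quasi-periodicity to simplify the off-diagonal $\sigma$-ratios) is where sign and factor errors creep in. Finally, I would sanity-check the answer in two ways: set $f_{12}f_{21}=0$ (equivalently force $z_0=\eta$ via the defining equation for $z_0$, using $\det L(\eta)=0$ which holds since $L$ has its branch structure there) and recover~(\ref{eform}) for $N=2$; and take the rational degeneration $\sigma(x)\to x$ and compare with the Poisson bracket of~\cite{AF98}. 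Closedness $\delta\omega=0$ is automatic from the general theory of~(\ref{form}), but I would note it explicitly as a consistency check on the explicit expression~(\ref{newform}), since $\delta(\tilde V(x_1-x_2)\,\delta x_1\wedge\delta x_2)$ now also involves $\delta z_0$ and must be absorbed by $\delta\ln f_i$ through the constraint linking $f_3$ to $z_0$.
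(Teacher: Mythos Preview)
Your route---evaluate the residue sum in~(\ref{form}) directly at the points of $I=\{\gamma_s,0,\pm z_0\}$---is in principle legitimate but is \emph{not} what the paper does, and your sketch conflates the two sides of the residue theorem in a way that would lead to double counting. The paper does not compute a single residue at $z=0$ or $z=\pm z_0$. Instead it uses the derived action--angle formula~(\ref{uform}), $\omega=\sum_s\delta\ln k(\hat\gamma_s)\wedge\delta z(\hat\gamma_s)$, where the sum is over the six poles $\hat\gamma_s$ of the eigenvector $\psi$ on $\hat\Gamma$. With the normalization $\psi_1\equiv1$, these poles are located explicitly: two sit over $z=x_1-x_2$ and four sit over $z=-x_1-\eta$. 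The crucial technical point is that four of the $\hat\gamma_s$ project onto the Tyurin points $\gamma_s$, so one is not in the ``general position'' assumed in the standard derivation of~(\ref{uform}); the paper handles this by changing normalization to $\sum_i\psi_i\equiv1$ via a diagonal matrix $V$, computing the correction term $\sum_{q\in I}\mathrm{res}_q\,\mathrm{Tr}(K^{-1}\delta K\wedge\delta V\,V^{-1})\,dz$, and showing that on the leaves $\delta\eta=0$, $\delta z_0=0$ this correction exactly supplies the missing four summands, so~(\ref{uform}) holds as stated. Substituting the six $\hat\gamma_s$ then gives~(\ref{newform}) directly; the $\tilde V$ term arises from the value of $k$ at the eigenvector poles, not from residues at $\pm z_0$.

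Your sentence ``the contributions at the Tyurin points $\gamma_s$ produce\ldots the action--angle piece $\sum_s\delta\ln k(\hat\gamma_s)\wedge\delta z(\hat\gamma_s)$'' mixes the $\gamma_s$ (poles of $L$, elements of $I$) with the $\hat\gamma_s$ (poles of $\psi$, not in $I$); these are different sets, and the action--angle expression is the result of summing \emph{all} of~(\ref{form}) and flipping to the $\hat\gamma_s$ side, not of the $\gamma_s$ contribution alone. If you then add residues at $0,\pm z_0$ on top of~(\ref{uform}) you will over-count. Finally, your sanity check for the spinless limit is wrong: the spinless case is $f_3^2=f_1f_2$ (equivalently $z_0=\eta$ by the defining equation for $z_0$ and identity~(\ref{imprel})), not $f_{12}f_{21}=0$; and $z=\eta$ is a pole of $L$, not a zero of $\det L$, so ``$\det L(\eta)=0$'' is false.
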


\begin{proof}
The eigenvector $\psi$ of $L$ in any normalization is a meromorphic function on $\hat{\Gamma}$
and it has $\hat{g}+1=6$ poles $\hat{\gamma}_s$.
The proof of formula~(\ref{uform}) in~\cite{IK04} assumes that the situation is in general
position, i.e. projections of points $\hat{\gamma}_i$ don't coincide with $\gamma_s$.

Most appropriate normalization here is $\psi_1 \equiv 1$, because it easily allows us to find poles
$\hat{\gamma}_s$ of $\psi$. Two of them ($s=1,2$) lie above the point $z=x_1-x_2$, and the other
are above $z=-x_1-\eta$ ($s=3,4,5,6$). This is not the case of general position, but it
turns out that the same formula~(\ref{uform}) still holds.

The proof in~\cite{IK02} and~\cite{IK04} implies that 2-form~(\ref{form}) in the
normalization $\psi_1 \equiv 1$ equals to
$\omega_0=\sum_{s=1}^2 \delta \ln{k(\hat{\gamma}_s)} \wedge \delta z(\hat{\gamma}_s).$

A change of normalization of $\Psi$ from $\psi_1 \equiv 1$ to $\sum \psi_i \equiv 1$ (the last one
is in "general position")
corresponds to the transformation $\tilde{\Psi}=\Psi V$, where
\[
V=
\begin{pmatrix}
\dfrac{L_{12}}{k_1-L_{11}+L_{12}} && 0\\
0 && \dfrac{L_{12}}{k_2-L_{11}+L_{12}}
\end{pmatrix}.
\]

According to the computations in~\cite{IK04},
$$
\omega=\omega_0+\sum_{q \in I} \text{res}_q \thinspace \text{Tr}\left(K^{-1} \delta K \wedge \delta V V^{-1} \right) dz.
$$
Since $\omega$ has to be restricted to the leaves where $\delta \ln{k} dz$ is holomorphic (which
is equivalent to 2 conditions: $\delta \eta=0$ and $\delta z_0=0$),
the only non-zero residue in the second term is at the point $z(\gamma_i)=-x_1-\eta$.
After computing the residue, we get that
$\omega=\omega_0+\sum_{s=3}^6 \delta \ln{k(\hat{\gamma}_s)} \wedge \delta z(\hat{\gamma}_s)$,
i.e. effectively formula~(\ref{uform}) holds in both normalizations.

Substituting $\hat{\gamma}_s$ in~(\ref{uform}), we find that
$$
\omega=-\delta \ln{f_1} \wedge \delta x_1-\delta \ln{f_2} \wedge \delta x_2+
2 \tilde{V}(x_1-x_2) \delta x_1 \wedge \delta x_2,$$
where $\tilde{V}(x)=\zeta(x+z_0)-\zeta(x)$.

The Hamiltonian $H=f_1+f_2$ defines the flow
\[
\begin{cases}
\dot{f}_1=-f_1 f_2 (\zeta(z_0+x_1-x_2)-\zeta(z_0-x_1+x_2)-2\zeta(x_1-x_2))\\
\dot{f}_2=f_1 f_2 (\zeta(z_0+x_1-x_2)-\zeta(z_0-x_1+x_2)-2\zeta(x_1-x_2))\\
\dot{x}_1=f_1\\
\dot{x}_2=f_2.
\end{cases}
\]

Using identities for Weierstrass $\sigma$-functions, namely,
$$
\sigma(a+c)\sigma(a-c)\sigma(b+d)\sigma(b-d)-\sigma(a+d)\sigma(a-d)\sigma(b+c)\sigma(b-c)=
$$
$$
=\sigma(a+b)\sigma(a-b)\sigma(c+d)\sigma(c-d) \text{, and}$$
$$
\zeta(a)+\zeta(b)+\zeta(c)-\zeta(a+b+c)=\dfrac{\sigma(a+b)\sigma(b+c)\sigma(a+c)}{\sigma(a)\sigma(b)\sigma(c)\sigma(a+b+c)},
$$
it follows from the definition of $z_0$ that
\begin{equation}\label{imprel}
f_1 f_2 (2\zeta(x_1-x_2)+\zeta(z_0-x_1+x_2)-\zeta(z_0+x_1-x_2))=
\end{equation}
$$
=f_3^2 (2\zeta(x_1-x_2)+\zeta(\eta-x_1+x_2)-\zeta(\eta+x_1-x_2)).
$$
With the help of this identity, we can show that the above equations are equivalent to
\[
\begin{cases}
\ddot{x}_1 = f_3^2 (2\zeta(x_1-x_2)+\zeta(\eta-x_1+x_2)-\zeta(\eta+x_1-x_2))\\
\ddot{x}_2 = -f_3^2 (2\zeta(x_1-x_2)+\zeta(\eta-x_1+x_2)-\zeta(\eta+x_1-x_2)),
\end{cases}
\]
which is an RS system.

The spinless case occurs when $f_3^2 = f_1 f_2$ and $z_0 = \eta$ as one can observe from~(\ref{imprel}).
\end{proof}

\emph{Remark.} A Poisson structure was found in~\cite{AF98} in the rational limit for arbitrary $N$ (see formula (3.31) in~\cite{AF98}).
In the case of 2 particles it is non-degenerate and defined on a 6-dimensional space
$(f_{11},f_{12},f_{21},f_{22},x_1,x_2)$. The corresponding 2-form is defined on the same space and
coincides with~(\ref{newform}) on the leaves $\delta z_0=0$ and after reduction with respect to the
action $f_{12} \to f_{12}/\lambda \text{, } f_{21} \to f_{21} \lambda$.

\section{Acknowledgments}

I am very grateful to I.Krichever for many helpful and interesting discussions.

\end{document}